\newtheorem{theorem}{Theorem}
\newtheorem{corollary}[theorem]{Corollary}
\theoremstyle{definition}
\newtheorem{defn}[theorem]{Definition}{\bfseries}{\upshape}
\newtheorem{example}[theorem]{Example}
\newcommand{\N}{\mathbb{N}}
\newcommand{\Q}{\mathbb{Q}}
\newcommand{\R}{\mathbb{R}}
\newcommand{\Z}{\mathbb{Z}}
\newcommand{\sol}{z}
\newcommand\mbar[1]{\overline{#1}}
\let\originalleft\left
\let\originalright\right
\renewcommand{\left}{\mathopen{}\mathclose\bgroup\originalleft}
\renewcommand{\right}{\aftergroup\egroup\originalright}
\begin{document}

\author{Hoon Hong\\
North Carolina State University, Raleigh, NC, USA\\
\url{hong@ncsu.edu}
\and
Thomas Sturm\\
CNRS, Inria, and the University of Lorraine, Nancy, France\\
MPI Informatics and Saarland University, Saarbrücken, Germany\\
\url{thomas.sturm@loria.fr}}

\title{Positive Solutions of Systems of\\ Signed Parametric Polynomial Inequalities }
\maketitle

\begin{abstract}
  We consider systems of strict multivariate polynomial inequalities over the
  reals. All polynomial coefficients are parameters ranging over the reals,
  where for each coefficient we prescribe its sign. We are interested in the
  existence of positive real solutions of our system for all choices of
  coefficients subject to our sign conditions. We give a decision procedure for
  the existence of such solutions. In the positive case our procedure yields a
  parametric positive solution as a rational function in the coefficients. Our
  framework allows to reformulate heuristic subtropical approaches for
  non-parametric systems of polynomial inequalities that have been recently used
  in qualitative biological network analysis and, independently, in
  satisfiability modulo theory solving. We apply our results to characterize the
  incompleteness of those methods.
\end{abstract}

\section{Introduction}

We investigate the problem of finding a \emph{parametric positive} solution of a
system of \emph{signed parametric} polynomial inequalities, if exists. We
illustrate the problem by means of two toy examples:
\[
  f(x) =  c_2x^2-c_1x+c_0,\quad
  g(x) =  -c_2x^2+c_1x-c_0,
\]
where $c_2$, $c_1$, $c_0$ are parameters. An expression $\sol(c)$ is called a
parametric positive solution of $f(x)>0$ if for all $c>0$ we have $\sol(c)>0$
and $f(\sol(c))>0$. One easily verifies that $\sol(c)=\frac{c_1}{c_2}$ is a
parametric positive solution of $f(x)$. However, $g(x)>0$ does not have any
parametric positive solution since $g(x)>0$ has no positive solution when, e.g,
$c_2=c_1=c_0=1$. Of course, we are interested in tackling much larger cases with
respect to numbers of variables, monomials, and polynomials.

The problem is important as systems of polynomial inequalities often arise in
science and engineering applications, including, e.g., the qualitative analysis
of biological or chemical networks
\cite{WeberSturm:11a,ErramiEiswirth:15a,BradfordDavenport:17a,EnglandErrami:17b}
or Satisfiability Modulo Theories (SMT) solving
\cite{NieuwenhuisOliveras:06a,Abraham2016,FontaineOgawa:17b}. Surprisingly often
one is indeed interested in positive solutions. For instance, unknowns in the
biological and chemical context of
\cite{WeberSturm:11a,ErramiEiswirth:15a,BradfordDavenport:17a,EnglandErrami:17b}
are typically positive concentrations of species or reaction rates, where the
direction of the reaction is known. In SMT solving, positivity is often not
required but, in the satisfiable case, benchmarks typically have also positive
solutions; comprehensive statistical data for several thousand benchmarks can be
found in \cite[Sect.~6]{FontaineOgawa:17b}. In many areas systems have
parameters and one desires to have parametric solutions. Hence, an efficient and
reliable tool for finding parametric positive solutions can aid scientists and
engineers in developing and investigating their mathematical models.

The problem of finding parametric positive solution is essentially that of
quantifier elimination over the first order theory of real closed fields. In
1930, Alfred Tarski~\cite{Tarski:30} showed that real quantifier elimination can
be carried out algorithmically. Since then, there have been intensive research,
producing profound theories, dramatically improved algorithms, and highly
refined implementations in widely available computer algebra software such as
Mathematica, Maple, Qepcad~B, or Reduce, e.g.,
\cite{Tarski:30,Collins:75,Arnon:81,McCallum:84,Grigorev_Vorobjov:88,%
  Canny:88,Hong:90a,Hong:90b,Collins_Hong:91,Weispfenning:95,SturmW96,%
  Sturm:99a,DBLP:conf/issac/Brown00,Strzebonski06,Hong_Safey:2012,Kosta:16a}.
However, existing general quantifier elimination algorithms are still too
inefficient for tackling even small problems of finding parametric positive
solutions.

The main contribution of this paper is to provide simple and efficient
algorithmic criteria for deciding whether or not a given signed parametric
system has a parametric positive solution. If so, we provide an explicit formula (rational function) 
for a parametric positive solution.
%
The main challenge was eliminating many universal quantifiers in the problem
statement. We tackled that challenge by, firstly, carefully
approximating/bounding polynomials by suitable multiple of monomials and,
secondly, tropicalizing, i.e., linearizing monomials by taking logarithms in the
style of \cite{2000math......5163V}. However, unlike standard tropicalization
approaches, we determine sufficiently large \emph{finite bases} for our
logarithms, in order to get an explicit formula for parametric positive
solutions.

Our main result also shines a new light on recent heuristic subtropical
methods~\cite{Sturm:15b,FontaineOgawa:17b}: We provide a precise
characterization of their incompleteness in terms of the existence of parametric
positive solutions for the originally non-parametric input problems considered
there.

The paper is structured as follows. In Section~\ref{sec:notation}, we motivate
and present a compact and convenient notation for a systems of multivariate
polynomials, which will be used throughout the paper. In Section~\ref{sec:main},
we precisely define the key notions of \emph{signed parametric systems} and
\emph{parametric positive solutions}. Then we present and prove the main result
of this paper, which shows how to check the existence of a parametric positive
solution and, in the positive case, how to find one. In
Section~\ref{sec:subtropical}, we apply use our framework and our result to
re-analyze and improve the above-mentioned subtropical
methods~\cite{Sturm:15b,FontaineOgawa:17b}.

\section{Notation}\label{sec:notation}

The principal mathematical object studied in this paper are systems of
multivariate polynomials over the real numbers. In order to minimize cumbersome
indices, we are going to introduce some compact notations. Let us start with a
motivation by means of a simple example.%
\begin{example}
  Consider the following system of three polynomials in two variables:
  \begin{align*}
    f_{1} & =  2x_{1}^{2}x_{2}-x_{1}^{3}\\
    f_{2} & =  -3x_{1}x_{2}^{2}+6x_{1}^{3}\\
    f_{3} & =  -x_{1}^{2}+5x_{1}^{1}x_{2}^{2}.
  \end{align*}
  We rewrite those polynomials by aligning their signs, coefficients, and
  monomials supports:%
  \[%
    \begin{array}[c]{c@{\quad}c@{\quad}rr@{\quad}c@{\quad}rr@{\quad}c@{\quad}rr}%
      f_{1} & = & 1\cdot2\cdot{} & x_{1}^{2}x_{2}^{1} &  +  & 0\cdot1\cdot{} &
                  x_{1}^{1}x_{2}^{2} & + & -1\cdot4\cdot{}& x_{1}^{3}x_{2}^{0}\\
      f_{2} & = & 0\cdot1\cdot{} & x_{1}^{2}x_{2}^{1} &  +  & -1\cdot3\cdot{} &
                 x_{1}^{1}x_{2}^{2} & + & 1\cdot6\cdot{} & x_{1}^{3}x_{2}^{0}\\
      f_{3} & = & -1\cdot1\cdot{} & x_{1}^{2}x_{2}^{1}&  +  & 1\cdot5\cdot{} &
                 x_{1}^{1}x_{2}^{2} & + & 0\cdot1\cdot{} & x_{1}^{3}x_{2}^{0}\rlap{,}
    \end{array}
  \]
  where signs are represented by $-1$, $0$, and $1$. Note that we are writing
  $0$ coefficients as $0\cdot1$ for notational uniformity. Rewriting this in
  matrix-vector notation, we have%
  \[
    \left[
      \begin{array} [c]{c}%
        f_{1}\\
        f_{2}\\
        f_{3}%
      \end{array}
    \right] =\left( \left[
        \begin{array} [c]{rrr}%
          1 & 0 & -1\\
          0 & -1 & 1\\
          -1 & 1 & 0
        \end{array}
      \right] \circ\left[
        \begin{array} [c]{rrr}%
          2 & 1 & 4\\
          1 & 3 & 6\\
          1 & 5 & 1
        \end{array}
      \right] \right) \left[
      \begin{array} [c]{c}%
        x_{1}^{2}x_{2}^{1}\\
        x_{1}^{1}x_{2}^{2}\\
        x_{1}^{3}x_{2}^{0}%
      \end{array}
    \right],
  \]
  where $\circ$ is the component-wise Hadamard product. Pushing this even
  further, we have%
  \[
    \left[
      \begin{array} [c]{c}%
        f_{1}\\
        f_{2}\\
        f_{3}%
      \end{array}
    \right] =\left( \left[
        \begin{array} [c]{rrr}%
          1 & 0 & -1\\
          0 & -1 & 1\\
          -1 & 1 & 0
        \end{array}
      \right] \circ\left[
        \begin{array} [c]{rrr}%
          2 & 1 & 4\\
          1 & 3 & 6\\
          1 & 5 & 1
        \end{array}
      \right] \right) \left[
      \begin{array} [c]{cc}%
        x_{1} & x_{2}%
      \end{array}
    \right] ^{\left[
        \begin{array} [c]{cc}%
          2 & 1\\
          1 & 2\\
          3 & 0
        \end{array}
      \right]}%
    .\]
  Thus we have arrived at a form
  \[
    f=\left( s\circ c\right) x^{e},
  \]
  where
  \begin{multline*}
    f=\left[
      \begin{array} [c]{c}%
        f_{1}\\
        f_{2}\\
        f_{3}%
      \end{array}
    \right],\quad
    s=\left[
      \begin{array} [c]{rrr}%
        1 & 0 & -1\\
        0 & -1 & 1\\
        -1 & 1 & 0
      \end{array}
    \right],\quad
    c=\left[
      \begin{array} [c]{rrr}%
        2 & 1 & 4\\
        1 & 3 & 6\\
        1 & 5 & 1
      \end{array}
    \right],\\
    x=[%
    \begin{array} [c]{cc}%
      x_{1} & x_{2}%
    \end{array}
    ],\quad
    e=\left[
      \begin{array} [c]{c}%
        e_{1}\\
        e_{2}\\
        e_{3}%
      \end{array}
    \right] =\left[
      \begin{array} [c]{cc}%
        2 & 1\\
        1 & 2\\
        3 & 0
      \end{array}
    \right].
\end{multline*}
This concludes our example.
\end{example}

In general, a system $f\in\R[x_1,\dots,x_d]^u$ of multivariate polynomials over
the reals will be written compactly as
\[
f=\left(  s\circ c\right)  x^{e},
\]
where
\begin{multline*}
  f=\left[
    \begin{array}
      [c]{c}%
      f_{1}\\
      \vdots\\
      f_{u}%
    \end{array}
  \right],\quad
  s=\left[
    \begin{array}
      [c]{ccc}%
      s_{11} & \cdots & s_{1v}\\
      \vdots &  & \vdots\\
      s_{u1} & \cdots & s_{uv}%
    \end{array}
  \right],\quad
  c=\left[
    \begin{array}
      [c]{ccc}%
      c_{11} & \cdots & c_{1v}\\
      \vdots &  & \vdots\\
      c_{u1} & \cdots & c_{uv}%
    \end{array}
  \right],\\
  x=[%
  \begin{array}
    [c]{ccc}%
    x_{1} & \cdots & x_{d}%
  \end{array}
  ],\quad
  e=\left[
    \begin{array}
      [c]{c}%
      e_{1}\\
      \vdots\\
      e_{v}%
    \end{array}
  \right]  =\left[
    \begin{array}
      [c]{ccc}%
      e_{11} & \cdots & e_{1d}\\
      \vdots &  & \\
      e_{v1} & \cdots & e_{vd}%
    \end{array}
  \right].
\end{multline*}
We call $s\in\left\{ -1,0,1\right\}^{u\times v}$ the \emph{sign matrix},
$c\in\mathbb{R}_{+}^{u\times v}$ the \emph{coefficient matrix}, and
$e\in\mathbb{N}^{v\times d}$ the \emph{exponent matrix} of $f$.

\section{Main Result}\label{sec:main}

\begin{defn} [Signed parametric systems]
  A \emph{signed parametric system} is given by
  \[
    f=\left(  s\circ c\right)  x^{e},
  \]
  where the sign matrix $s\in\{-1,0,1\}^{u\times v}$ and the exponent matrix
  $e\in\N^{v\times d}$ are specified but the coefficient matrix $c$ is
  unspecified in the sense that it is left parametric. Formally $c$ is a
  $u\times v$-matrix of pairwise different indeterminates.
\end{defn}

\begin{defn}[Parametric positive solutions]
  Consider a signed parametric system $f=\left( s\circ c\right) x^{e}$. A
  \emph{parametric positive solution} of $f(x)>0$ is a function
  $\sol:\R_+^{u\times v}\to\R_+^d$ that maps each possible specification of the
  coefficient matrix $c$ to a solution of the corresponding non-parametric
  system, i.e.,
  \[
    \underset{c>0}{\forall}\ f\bigl(\sol(c)\bigr)>0.
  \]
\end{defn}

\newpage
\begin{theorem}[Main]\label{TH:main}
Let $f=\left(  s\circ c\right)  x^{e}$ be a signed parametric system.
Let
\[
  C(n)\ :=\ \bigwedge_{i}\ \bigwedge_{s_{ik}<0}\ \bigvee_{s_{ij}>0}\
  (e_{j}-e_{k})\cdot n\geq1.
\]
Then the following are equivalent:
\begin{enumerate}[(i)]
\item $f(x)>0$ has a parametric positive solution.
\item $C(n)$ has a solution $n\in\mathbb{R}^{d}$.
\item $C(n)$ has a solution $n\in\mathbb{Z}^{d}$.
\end{enumerate}
In the positive case, the following function $\sol$ is a parametric positive
solution of $f(x)>0$:
\[
  \sol(c)=t^{n},\quad
  \text{where}\quad
  t=1+\sum_{s_{ij}>0\atop s_{ik}<0}\frac{c_{ik}}{c_{ij}}.
\]
In fact, we even have
$\underset{c>0}{\forall}\ \underset{r\geq t}{\forall}\ f(r^n)>0$.
\end{theorem}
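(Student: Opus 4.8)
The plan is to prove the four implications $(iii)\Rightarrow(ii)$, $(ii)\Rightarrow(i)$, $(i)\Rightarrow(ii)$, $(ii)\Rightarrow(iii)$, which together give the stated equivalences, and to extract the explicit $\sol$ (and its strengthening) as a by-product of $(ii)\Rightarrow(i)$. Throughout I tacitly assume no row of $s$ vanishes identically, since such a row forces $f_i\equiv 0$ and makes $f(x)>0$ unsatisfiable, so it should have been discarded. Of the four implications, $(iii)\Rightarrow(ii)$ is the trivial inclusion $\Z^d\subseteq\R^d$; the substantial ones are $(ii)\Rightarrow(i)$, where the formula gets verified, and $(i)\Rightarrow(ii)$. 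Write $P_i=\{\,j\mid s_{ij}>0\,\}$ and $M_i=\{\,k\mid s_{ik}<0\,\}$ below.

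For $(ii)\Rightarrow(i)$ together with the final assertion, I would fix $n\in\R^d$ with $C(n)$, fix $c>0$ and $r\ge t$ (so $r\ge t\ge1$), and observe that $(r^n)^{e_j}=r^{e_j\cdot n}$ and that the zero-sign terms drop out, so $f_i(r^n)=\sum_{j\in P_i}c_{ij}r^{e_j\cdot n}-\sum_{k\in M_i}c_{ik}r^{e_k\cdot n}$. If $M_i=\varnothing$ this is a sum of positive terms over the nonempty set $P_i$, and we are done; otherwise $C(n)$ supplies, for each $k\in M_i$, some $\psi(k)\in P_i$ with $(e_{\psi(k)}-e_k)\cdot n\ge 1$, hence $r^{e_k\cdot n}\le r^{-1}r^{e_{\psi(k)}\cdot n}$ because $r\ge1$. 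Summing over $k\in M_i$ and regrouping by the value of $\psi$ gives $\sum_{k\in M_i}c_{ik}r^{e_k\cdot n}\le r^{-1}\sum_{j\in P_i}r^{e_j\cdot n}\sigma_{ij}$ with $\sigma_{ij}=\sum_{k\in M_i,\ \psi(k)=j}c_{ik}$. Now $\sigma_{ij}\le\sum_{k\in M_i}c_{ik}$, hence $\sigma_{ij}/c_{ij}\le\sum_{k\,:\,s_{ik}<0}c_{ik}/c_{ij}\le t-1<t\le r$, so $r^{-1}\sigma_{ij}<c_{ij}$ for every $j\in P_i$. Therefore $\sum_{k\in M_i}c_{ik}r^{e_k\cdot n}\le r^{-1}\sum_{j\in P_i}r^{e_j\cdot n}\sigma_{ij}<\sum_{j\in P_i}c_{ij}r^{e_j\cdot n}$, i.e.\ $f_i(r^n)>0$. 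Taking $r=t$ shows $\sol(c)=t^n$ is a parametric positive solution, and since nothing beyond $r\ge t$ was used, the same argument proves the stronger $\forall c>0\,\forall r\ge t\,f(r^n)>0$.

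For $(i)\Rightarrow(ii)$ I would feed the given parametric positive solution $\sol$ one single, carefully chosen coefficient matrix $c^{\star}$, namely $c^{\star}_{ij}=1$ when $s_{ij}\ge0$ and $c^{\star}_{ij}=\lambda$ when $s_{ij}<0$, for any fixed real $\lambda>\max_i|P_i|$. Put $x^{\star}=\sol(c^{\star})>0$ and $y^{\star}=\log x^{\star}\in\R^d$ componentwise. For every $i$ with $M_i\ne\varnothing$ we get $P_i\ne\varnothing$ (else $f_i(x^{\star})\le0$), and for each $k\in M_i$ the inequality $f_i(x^{\star})>0$ gives $\sum_{j\in P_i}(x^{\star})^{e_j}>\lambda(x^{\star})^{e_k}$, hence $(x^{\star})^{e_j}>(\lambda/|P_i|)(x^{\star})^{e_k}$ for a maximizing $j=\phi(i,k)\in P_i$, i.e.\ $(e_{\phi(i,k)}-e_k)\cdot y^{\star}>\log(\lambda/|P_i|)>0$. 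If all $M_i$ are empty then $C$ holds trivially; otherwise $y^{\star}$ strictly satisfies the finite system $\{(e_{\phi(i,k)}-e_k)\cdot n>0\mid s_{ik}<0\}$, so scaling $y^{\star}$ by the reciprocal of $\min_{s_{ik}<0}(e_{\phi(i,k)}-e_k)\cdot y^{\star}>0$ yields $n\in\R^d$ with $(e_{\phi(i,k)}-e_k)\cdot n\ge1$ for all negative entries, which witnesses $C(n)$. For $(ii)\Rightarrow(iii)$ I would take a real solution $n^{\star}$ of $C$, fix a choice function $\phi$ realizing the disjunctions, set $a_{ik}:=e_{\phi(i,k)}-e_k\in\Z^d$, and round $Nn^{\star}$ coordinatewise to the nearest integer vector $\tilde n$ for a large integer $N$: since $\|\tilde n-Nn^{\star}\|_\infty\le\tfrac12$, we get $a_{ik}\cdot\tilde n\ge N-\tfrac12\|a_{ik}\|_1\ge1$ as soon as $N\ge1+\tfrac12\max_{ik}\|a_{ik}\|_1$, so $\tilde n\in\Z^d$ satisfies $C$.

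I expect the main obstacle to be $(i)\Rightarrow(ii)$: one must extract a solution of the purely linear/combinatorial condition $C$ from an arbitrary, a priori wild, function $\sol$. The device that makes this routine is the specialization $c^{\star}$ above — coefficient $1$ on each positive monomial and a common huge coefficient $\lambda$ on the negative ones — which squeezes $\sol(c^{\star})$ so far into the open ``tropical cone'' that a strict, and hence (after scaling) an exact, solution of $C$ can simply be read off; this is also why no case analysis of boundary or degeneracy phenomena in the Newton data is needed. The one genuinely technical point in the rest is the bookkeeping in $(ii)\Rightarrow(i)$ when several negative monomials of one $f_i$ are charged to the same positive monomial, and absorbing this is precisely the purpose of the sum $\sum c_{ik}/c_{ij}$ in the definition of $t$.
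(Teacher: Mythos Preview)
Your proof is correct and follows essentially the same strategy as the paper: specialize the coefficients to reduce $(i)\Rightarrow(ii)$ to a single non-parametric instance, take logarithms to linearize, and for the converse direction plug in $x=r^{n}$ and compare monomials. Two points of genuine (minor) divergence are worth recording. First, for $(ii)\Rightarrow(iii)$ the paper invokes the Linear Tarski Principle to pass from $\R$ to $\Q$ and then clears denominators, whereas you give a self-contained rounding argument via $N n^{\star}$; your version is more elementary and explicitly constructive. Second, for the forward construction the paper proves $(iii)\Rightarrow(i)$ by bounding each negative sum by $\bar c_i\max_{s_{ik}<0}x^{e_k}$ and then passing through $\log_t$, while you prove the stronger $(ii)\Rightarrow(i)$ directly by charging each negative monomial to a chosen positive one and tracking the accumulated weights $\sigma_{ij}$; both routes arrive at the same $t$ and the same uniform statement for all $r\ge t$. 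Your explicit remark that identically zero rows of $s$ must be excluded is a necessary hypothesis that the paper leaves implicit.
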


\begin{proof}
  We first show that (i) implies (ii):
  \begin{align*}
    \text{(i)} \ \
    \Longleftrightarrow
    & \ \ \underset{c>0}{\forall}
      \ \ \underset{x>0}{\exists}
      \ \ \left(s\circ c\right)  x^{e}>0\\
    \Longleftrightarrow
    & \ \ \underset{c>0}{\forall}
      \ \ \underset{x>0}{\exists}
      \ \ \bigwedge_{i}
      \ \ \sum_{s_{ij}>0}c_{ij}x^{e_{j}}>\sum_{s_{ik}<0}
      c_{ik}x^{e_{k}}\\
    \Longrightarrow
    & \ \ \underset{x>0}{\exists}
      \ \ \bigwedge_{i}
      \ \ \sum_{s_{ij}>0}x^{e_{j}}>\sum_{s_{ik}<0}2vx^{e_{k}},\quad
      \text{by instantiating $c$}\\
    \Longrightarrow
    & \ \ \underset{x>0}{\exists}
      \ \ \bigwedge_{i}
      \ \ v\max_{s_{ij}>0}x^{e_{j}}>\max_{s_{ik}>0}2vx^{e_{k}}\\
    \Longleftrightarrow
    & \ \ \underset{x>0}{\exists}
      \ \ \bigwedge_{i}
      \ \ \max_{s_{ij}>0}x^{e_{j}}>\max_{s_{ik}>0}2x^{e_{k}}\\
    \Longleftrightarrow
    & \ \ \underset{x>0}{\exists}
      \ \ \bigwedge_{i}
      \ \ \bigwedge_{s_{ik}<0}
      \ \ \bigvee_{s_{ij}>0}
      \ \ x^{e_{j}}>2x^{e_{k}}\\
    \Longleftrightarrow
    & \ \ \underset{x>0}{\exists}
      \ \ \bigwedge_{i}
      \ \ \bigwedge_{s_{ik}<0}
      \ \ \bigvee_{s_{ij}>0}
      \ \ x^{e_{j}-e_{k}}>2\\
    \Longleftrightarrow
    & \ \ \underset{x>0}{\exists}
      \ \ \bigwedge_{i}
      \ \ \bigwedge_{s_{ik}<0}
      \ \ \bigvee_{s_{ij}>0}
      \ \ \left(e_{j}-e_{k}\right)\cdot\log_{2}x>1\\
    \Longleftrightarrow
    & \ \ \underset{n\in\R^d}{\exists}
      \ \ \bigwedge_{i}
      \ \ \bigwedge_{s_{ik}<0}
      \ \ \bigvee_{s_{ij}>0}
      \ \ \left(  e_{j}-e_{k}\right)\cdot n>1,\quad
      \text{using $\log_{2}:\R_+\leftrightarrow\R$}\\
    \Longrightarrow
    & \ \ \text{(ii)}.
  \end{align*}

  Assume now that (ii) holds. The existence of solutions $n\in\R^d$ and
  $n\in\Q^d$ of $C(n)$ coincides due to the Linear Tarski Principle: Ordered
  fields admit quantifier elimination for linear formulas, and therefore $\Q$ is
  an elementary substructure of $\R$ with respect to linear sentences
  \cite{LoosWeispfenning:93a}. Given a solution $n\in\Q^d$, we can use the
  principal denominator $\delta>0$ of all coordinates of $n$ to obtain a
  solution $\delta n\in\Z^d$. Hence (iii) holds.

  We finally  show that (iii) implies (i):
  \begin{align*}
    \text{(i)}\ \
    \Longleftrightarrow
    & \ \ \underset{c>0}{\forall}
      \ \ \underset{x>0}{\exists}
      \ \ \left(  s\circ c\right)  x^{e}>0\\
    \Longleftrightarrow
    &  \ \ \underset{c>0}{\forall}
      \ \ \underset{x>0}{\exists}
      \ \ \bigwedge_{i}
      \ \ \sum_{s_{ij}>0}c_{ij}x^{e_{j}}>\sum_{s_{ik}<0}c_{ik}x^{e_{k}}\\
    \Longleftarrow
    &  \ \ \underset{c>0}{\forall}
      \ \ \underset{x>0}{\exists}
      \ \ \bigwedge_{i}
      \ \ \max_{s_{ij}>0}c_{ij}x^{e_{j}}>\mbar{c}_{i}\max_{s_{ik}<0}x^{e_{k}},\quad
      \text{where $\mbar{c}_{i}=\sum_{s_{ik'}<0}c_{ik'}$}\\
    \Longleftrightarrow
    & \ \ \underset{c>0}{\forall}
      \ \ \underset{x>0}{\exists}
      \ \ \bigwedge_{i}
      \ \ \bigwedge_{s_{ik}<0}
      \ \ \bigvee_{s_{ij}>0}
      \ \ c_{ij}x^{e_{j}}>\mbar{c}_{i}x^{e_{k}}\\
    \Longleftrightarrow
    & \ \ \underset{c>0}{\forall}
      \ \ \underset{x>0}{\exists}
      \ \ \bigwedge_{i}
      \ \ \bigwedge_{s_{ik}<0}
      \ \ \bigvee_{s_{ij}>0}
      \ \ x^{e_{j}-e_{k}}>\frac{\mbar{c}_{i}}{c_{ij}}\\
    \Longleftarrow
    & \ \ \underset{c>0}{\forall}
      \ \ \underset{x>0}{\exists}
      \ \ \bigwedge_{i}
      \ \ \bigwedge_{s_{ik}<0}
      \ \ \bigvee_{s_{ij}>0}
      \ \ x^{e_{j}-e_{k}}\geq t,\\
    & \ \ \text{where $t$ is as stated in the theorem}\\
    \Longleftrightarrow
    &  \ \ \underset{c>0}{\forall}
      \ \ \underset{x>0}{\exists}
      \ \ \bigwedge_{i}
      \ \ \bigwedge_{s_{ik}<0}
      \ \ \bigvee_{s_{ij}>0}
      \ \ \left(e_{j}-e_{k}\right)\cdot\log_{t}x\geq1,\quad\\
    \Longleftrightarrow
    & \ \ \underset{n}{\exists}
      \ \ \bigwedge_{i}
      \ \ \bigwedge_{s_{ik}<0}
      \ \ \bigvee_{s_{ij}>0}
      \ \ \left(  e_{j}-e_{k}\right)\cdot n\geq1,\quad
      \text{using $\log_{t}:\mathbb{R}_{+}\leftrightarrow \mathbb{R}$}\\
    \Longleftarrow
    & \ \ \underset{n\in\mathbb{Z}^{d}}{\exists}
      \ \ \bigwedge_{i}
      \ \ \bigwedge_{s_{ik}<0}
      \ \ \bigvee_{s_{ij}>0}
      \ \ \left(e_{j}-e_{k}\right)\cdot n\geq1\\
    \Longleftrightarrow
    & \ \ \text{(iii)}.
  \end{align*}

  In our proof of the implication from (iii) to (i) we have applied $\log_t$ so
  that $n=\log_t x$ and, accordingly, $x=t^n$, where $t$ is as stated in the
  theorem. Notice that any larger choice $r\geq t$ would work there as well.
\end{proof}

\begin{example}\label{ex:hassol}
  Consider $\scriptsize f=\left[
  \begin{array} [c]{c}%
    f_{1}\\
    f_{2}
  \end{array}\right]$
  with
  \begin{align*}
    f_1&=-c_{11}x_1^5+c_{12}x_1^2x_2-c_{13}x_1^2+c_{15}x_2^2\\
    f_2&=c_{21}x_1^5+c_{22}x_1^2x_2+c_{23}x_1^2-c_{24}x_2^3.
  \end{align*}
  That is
  \[
    s=\left[
      \begin{array}
        [c]{ccccc}%
        -1 & 1 & -1 & 0 & 1\\
        1 & 1 & 1 & -1 & 0
      \end{array}
    \right],\quad
    e=\left[
      \begin{array}
        [c]{cc}%
        5 & 0\\
        2 & 1\\
        2 & 0\\
        0 & 3\\
        0 & 2
      \end{array}
    \right].
  \]
  Then $C(n)$ has a solution $n\in\mathbb{Z}^{2}$, e.g.,
  \[
    n=\left[
      \begin{array}
        [c]{cc}%
        -12\  & -11
      \end{array}
    \right].
  \]
  Hence $f=(s\circ c)  x^{e}>0$ has a parametric positive solution, e.g.,
  \begin{displaymath}
    \sol(c) = \left[
      \begin{array}
        [c]{cc}%
        t^{-12}\  & t^{-11}
      \end{array}\right],
  \end{displaymath}
  where
  $\displaystyle t=1+\frac{c_{11}}{c_{12}}+
  \frac{c_{11}}{c_{15}}+\frac{c_{13}}{c_{12}}+\frac{c_{13}}{c_{15}}+\frac{c_{24}}{c_{21}}+
  \frac{c_{24}}{c_{22}}+\frac{c_{24}}{c_{23}}$.
\end{example}

\begin{example}
  We slightly modify Example~\ref{ex:hassol} and consider
  $\scriptsize f=\left[
    \begin{array}[c]{c}%
      f_{1}\\
      f_{2}
    \end{array}\right]$
  with
  \begin{align*}
    f_1&=-c_{11}x_1^5+c_{12}x_1^2x_2-c_{13}x_1^2+c_{15}x_2^2\\
    f_2&=c_{21}x_1^5+c_{22}x_1^2x_2-c_{23}x_1^2-c_{24}x_2^3.
  \end{align*}
  That is
  \[
    s=\left[
      \begin{array}
        [c]{ccccc}%
        -1 & 1 & -1 & 0 & 1\\
        1 & 1 & -1 & -1 & 0
      \end{array}
    \right],\quad
    e=\left[
      \begin{array}
        [c]{cc}%
        5 & 0\\
        2 & 1\\
        2 & 0\\
        0 & 3\\
        0 & 2
      \end{array}
    \right].
  \]
  Then $C(n)$ does not have a solution $n\in\mathbb{Z}^{2}$. Hence
  $f=( s\circ c) x^{e} > 0$ does not have a parametric positive
  solution.
\end{example}

\section{A Re-analysis of Subtropical Methods}\label{sec:subtropical}
For non-parametric systems of real polynomial inequalities, heuristic Newton
polytope-based \emph{subtropical methods} \cite{Sturm:15b,FontaineOgawa:17b}
have been successfully applied in two quite different areas: Firstly,
qualitative analysis of biological and chemical networks and, secondly, SMT
solving.

In the first area, a positive solution of a very large single inequality could
be computed. The left hand side polynomial there has more than $8\cdot10^5$
monomials in 10 variables with individual degrees up to 10. This computation was
the hard step in finding an exact positive solution of the corresponding
equation using a known positive point with negative value of the polynomial and
applying the intermediate value theorem. To give a very rough idea of the
biological background: The polynomial is a Hurwitz determinant originating from
a system of ordinary differential equations modeling mitogen-activated protein
kinase (MAPK) in the metabolism of a frog. Positive zeros of the Hurwitz
determinant point at Hopf bifurcations, which are in turn indicators for
possible oscillation of the corresponding reaction network. For further details
see \cite{ErramiEiswirth:15a}.

In the second area, a subtropical approach for systems of several polynomial
inequalities has been integrated with an SMT solver. That combination could
solve a surprisingly large percentage of SMT benchmarks very fast and thus
establishes an interesting heuristic preprocessing step for the SMT theory of
QF\_NRA (quantifier-free nonlinear arithmetic). For detailed statistics see
\cite[Sect.~6]{FontaineOgawa:17b}.

 The goal of this section is, to make precise the connections between subtropical
methods and our main result here, to use these connections to improve the
subtropical methods, and to precisely characterize their incompleteness.

\subsection{Subtropical Real Root Finding}
In \cite{Sturm:15b} we have studied an incomplete method for heuristically
finding a positive solution for a single multivariate polynomial inequality with
fixed integer coefficients:
\begin{displaymath}
  [f_1] = (s\circ c)x^{e}\quad
  \text{where}\quad
  s\in\{-1,0,1\}^{1\times v},\quad
  c\in\Z_+^{1\times v},\quad
  e\in\N^{v\times d}.
\end{displaymath}
The method considers the positive and the negative support, which in terms of
our notions is given by
\begin{displaymath}
  S^+=\{\,e_j\mid s_{1j}>0\,\},\quad
  S^-=\{\,e_k\mid s_{1k}<0\,\}.
\end{displaymath}
Then \cite[Lemma 4]{Sturm:15b} essentially states that $f_1(x)>0$ has a positive
solution if
\begin{multline*}
  C'\ :=
  \bigvee_{e_j\in S^+}
  \ \underset{n\in\R^d}{\exists}
  \ \underset{\gamma\in\R}{\exists}
  \ \Biggl(
  \left[
    \begin{array}{cc}
      -e_j & 1
    \end{array}
  \right]
  \left[
    \begin{array}{c}
      n\\
      \gamma
    \end{array}
  \right]\leq -1
  \land{}\\
  \bigwedge_{e_k\in S^+\cup S^{-}\atop e_k\neq e_j}
  \left[
    \begin{array}{cc}
      e_k & -1
    \end{array}
  \right]
  \left[
    \begin{array}{c}
      n\\
      \gamma
    \end{array}
  \right]\leq -1
  \Biggr).
\end{multline*}
Unfortunately, in \cite[Lemma~4]{Sturm:15b} vectors
$e_l=\left[\begin{array}{ccc} 0&\dots&0\end{array}\right]$ corresponding to
absolute summands are treated specially. We have noted already in
\cite[p.192]{FontaineOgawa:17b} that an inspection of the proof shows that this
is not necessary. Therefore we discuss here a slightly improved and simpler
version without that special treatment, which has been explicitly stated as
\cite[Lemma~2]{FontaineOgawa:17b}.

The proof of the loop invariant (I\textsubscript{1}) in
\cite[Theorem~5(ii)]{Sturm:15b} shows that the positive support need not be
considered in the conjunction:
\begin{multline*}
  C'\ \Longleftrightarrow
  \bigvee_{e_j\in S^+}
  \ \underset{n\in\R^d}{\exists}
  \ \underset{\gamma\in\R}{\exists}
  \ \Biggl(
    \left[
      \begin{array}{cc}
        -e_j & 1
      \end{array}
    \right]
    \left[
      \begin{array}{c}
        n\\
        \gamma
      \end{array}
    \right]\leq 1
    \land{}\\
    \bigwedge_{e_k\in S^{-}}
    \left[
      \begin{array}{cc}
        e_k & -1
      \end{array}
    \right]
    \left[
      \begin{array}{c}
        n\\
        \gamma
      \end{array}
    \right]\leq -1
  \Biggr).
\end{multline*}
Starting with Fourier--Motzkin elimination \cite[Sect.~12.2]{Schrijver86} of
$\gamma$, we obtain
\begin{align*}
  C'\ \
  & \Longleftrightarrow
    \ \ \bigvee_{e_j\in S^+}
    \ \ \underset{n\in\R^d}{\exists}
    \ \ \bigwedge_{e_k\in S^{-}}
    \ \ (e_k-e_j)\cdot n \leq -2\\
  & \Longleftrightarrow
    \ \ \bigvee_{e_j\in S^+}
    \ \ \underset{n\in\R^d}{\exists}
    \ \ \bigwedge_{e_k\in S^{-}}
    \ \ (e_j-e_k)\cdot n \geq 1\\
  & \Longleftrightarrow
    \ \ \underset{n\in\R^d}{\exists}
    \ \ \bigvee_{e_j\in S^+}
    \ \ \bigwedge_{e_k\in S^{-}}
    \ \ (e_j-e_k)\cdot n \geq 1\\
  & \Longleftrightarrow
    \ \ \underset{n\in\R^d}{\exists}
    \ \ \max_{e_j\in S^+}(e_j\cdot n) \geq \max_{e_k\in S^-}(e_k \cdot n+1)\\
  & \Longleftrightarrow
    \ \ \underset{n\in\R^d}{\exists}
    \ \ \bigwedge_{e_k\in S^{-}}
    \ \ \bigvee_{e_j\in S^+}
    \ \ (e_j-e_k)\cdot n \geq 1\\
  & \Longleftrightarrow
    \ \ \underset{n\in\R^d}{\exists}
    C(n)
\end{align*}
with $C(n)$ as in Theorem~\ref{TH:main}.

\begin{corollary}\label{co:issac}
  Consider $f_1\in\Z[x_1,\dots,x_d]$, say, $f_1=(s\circ c)x^e$, where
  $s\in\{-1,0,1\}^{1\times v}$, $c\in\Z_+^{1\times v}$, $e\in\N^{v\times d}$.
  Let $f^*=[(s\circ c^*)x^e]$, where $c^*$ is a $1\times v$-matrix of pairwise
  different indeterminates. Then the following are equivalent:
  \begin{enumerate}[(i)]
  \item The algorithm \texttt{\upshape find-positive}
    \cite[Algorithm~1]{Sturm:15b} does not fail, and thus finds a rational
    solution of $f_1>0$ with positive coordinates.
  \item There is a row $e_j$ of $e$ with $s_{1j}>0$ such that the following LP
    problem has a solution $n\in\Q^d$:
    \begin{displaymath}
      \bigwedge_{s_{1k}<0} (e_j-e_k)\cdot n \geq 1.
    \end{displaymath}
  \item $f^*>0$ has a parametric positive solution.
  \end{enumerate}
  In the positive case, $f(r^n)>0$ for all
  $r\geq 1+v\sum\limits_{s_{1k}<0}c_{1k}$.
\end{corollary}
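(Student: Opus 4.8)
The plan is to treat the corollary as a matter of assembling the equivalences already established in this section, together with Theorem~\ref{TH:main} and the Linear Tarski Principle; the only genuinely new ingredient is the norm bound in the final sentence.

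First I would pin down the meaning of (i). By inspection of \cite[Algorithm~1]{Sturm:15b}, the procedure \texttt{find-positive} loops over the rows $e_j$ with $s_{1j}>0$, tests the feasibility of one linear program for each of them, and reports failure precisely when all of these tests are infeasible, while \cite[Theorem~5]{Sturm:15b} guarantees that whenever it does not fail its output is a genuine positive rational solution of $f_1>0$. Hence (i) is equivalent to the truth of the condition $C'$ discussed above. The chain of equivalences displayed just before the corollary shows, on the one hand, that $C'\Longleftrightarrow\bigvee_{e_j\in S^+}\exists\,n\in\R^d\bigwedge_{e_k\in S^-}(e_j-e_k)\cdot n\geq1$, which is exactly condition (ii) with $\R$ in place of $\Q$, and, on the other hand, that $C'\Longleftrightarrow\exists\,n\in\R^d\,C(n)$. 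Since each disjunct in the first form is a linear formula over $\Q$, the Linear Tarski Principle invoked in the proof of Theorem~\ref{TH:main} lets me pass from a real to a rational solution within a fixed disjunct, so $C'\Longleftrightarrow$ (ii). Applying Theorem~\ref{TH:main} to $f^*$, whose sign and exponent matrices, and therefore whose formula $C(n)$, coincide with those of $f_1$, gives $\exists\,n\in\R^d\,C(n)\Longleftrightarrow$ (iii). Thus (i), (ii), (iii), and $C'$ are all equivalent.

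It remains to verify the bound. Suppose we are in the positive case, pick a row $e_j$ with $s_{1j}>0$ witnessing (ii), and clear denominators to get $n\in\Z^d$ with $(e_j-e_k)\cdot n\geq1$ for every $k$ with $s_{1k}<0$; such an $n$ solves $C(n)$. For $r\geq1+v\sum_{s_{1k}<0}c_{1k}$ and $x=r^n$ we then have $x^{e_j}/x^{e_k}=r^{(e_j-e_k)\cdot n}\geq r$ for each negative-sign $k$, hence $x^{e_k}\leq x^{e_j}/r$; discarding the remaining, nonnegative, positive-sign terms gives $f_1(x)\geq x^{e_j}\bigl(c_{1j}-\tfrac{1}{r}\sum_{s_{1k}<0}c_{1k}\bigr)>0$, where the last step uses that the original coefficients are positive integers, so that $c_{1j}\geq1$ and $r>\sum_{s_{1k}<0}c_{1k}$. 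Equivalently, one may check $1+v\sum_{s_{1k}<0}c_{1k}\geq t$ for the value $t$ of Theorem~\ref{TH:main}, using $\sum_{s_{1j}>0}1/c_{1j}\leq v$, and then quote the final clause of that theorem verbatim. I do not expect a real obstacle here: the subtle points are only keeping track of which quantifier ranges over $\R$, $\Q$, or $\Z$, which the Linear Tarski Principle handles uniformly, and the mild bookkeeping that lets the integrality of $c$ enter the bound.
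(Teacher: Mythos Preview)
Your proposal is correct and follows essentially the paper's approach: the equivalence of (i)--(iii) is assembled from the chain $C'\Longleftrightarrow\exists\,n\in\R^d\,C(n)$ derived just before the corollary together with Theorem~\ref{TH:main}, and the bound is obtained from the estimate $t\le 1+v\sum_{s_{1k}<0}c_{1k}$, which is precisely your second alternative. Your additional direct verification of $f_1(r^n)>0$ is sound but not in the paper; note that the clearing-denominators step there is harmless but unnecessary, since $r^{(e_j-e_k)\cdot n}\ge r$ already follows for the rational $n$ of (ii).
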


\begin{proof}
  The equivalence between (i), (ii), and (iii) has been derived above.

  According to Theorem~\ref{TH:main}, a solution for $f_1>0$ can be obtained by
  plugging~$c$ into the parametric positive solution for $f^*$. Since we have
  positive integer coefficients, we can bound $t$  from above as follows.
  \begin{displaymath}
    t=1+\sum\limits_{s_{1j}>0\atop s_{1k}<0}\frac{c_{1k}}{c_{1j}}
    \leq
    1+\sum\limits_{s_{1j}>0\atop s_{1k}<0}\frac{c_{1k}}{1}
    \leq
    1+v\sum\limits_{s_{1k}<0}c_{1k}.\qedhere
  \end{displaymath}
\end{proof}

In simple words the equivalence between (i) and (iii) in the corollary states
the following: The incomplete heuristic \cite[Algorithm~1]{Sturm:15b} succeeds
\emph{if and only if} not only the inequality for the input polynomial has a
solution as required, but also the inequality for all polynomials with the same
monomials and signs of coefficients as the input polynomial.

We have added (ii) to the corollary, because we consider this form optimal for
algorithmic purposes. Our special case of one single inequality allows to
transform the conjunctive normal form provided by Theorem~\ref{TH:main} into an
equivalent disjunctive normal form without increasing size. This way, a decision
procedure can use finitely many LP solving steps \cite{Schrijver86} instead of
employing more general methods like SMT solving \cite{NieuwenhuisOliveras:06a}.

Finally notice that the brute force search for a suitable $t$ in
\texttt{find-positive} \cite[Algorithm~1, l.10--12]{Sturm:15b} is not necessary
anymore. Our corollary computes a suitable number from the coefficients.

\subsection{Subtropical Satisfiability Checking}
Subsequent work \cite{FontaineOgawa:17b} takes an entirely geometric approach to
generalize the work in \cite{Sturm:15b} from one polynomial inequality to
finitely many such inequalities. Consider a system with fixed integer
coefficients in our notation:
\begin{displaymath}
  f=\left[
    \begin{array}
      [c]{c}%
      f_{1}\\
      \vdots\\
      f_{u}%
    \end{array}
    \right]
    =(s\circ c)x^e,\quad
    \text{where}\quad
    s\in\{-1,0,1\}^{u\times v},\quad
    c\in\Z_+^{u\times v},\quad
    e\in\N^{v\times d}.
\end{displaymath}
Then \cite[Theorem~12]{FontaineOgawa:17b} derives the following sufficient
condition for the existence of a positive solution of $f>0$:
\begin{multline*}
  C'' \ := 
  \underset{n\in\R^d}{\exists}
  \ \underset{\gamma_1\in\R}{\exists}\dots\underset{\gamma_u\in\R}{\exists}
  \ \bigwedge_{i=1}^u
  \ \Biggl(
    \Biggl(
      \bigvee_{s_{ij}>0}ne_{j}+\gamma_i>0
    \Biggr)
    \land
    \bigwedge_{s_{ik}<0}ne_{k}+\gamma_i<0
  \Biggr).
\end{multline*}
After an equivalence transformation, we can once more apply Fourier--Motzkin
elimination \cite[Sect.~12.2]{Schrijver86}:
\begin{align*}
  C'' \ \
  &\Longleftrightarrow
    \ \ \underset{n\in\R^d}{\exists}
    \ \ \bigwedge_{i=1}^u
    \ \ \bigvee_{s_{ij}>0}
    \ \ \underset{\gamma_i\in\R}{\exists}
    \ \ \left(
    ne_{j}+\gamma_i>0
    \land
    \bigwedge_{s_{ik}<0}ne_{k}+\gamma_i<0
    \right)\\
  &\Longleftrightarrow
    \ \ \underset{n\in\R^d}{\exists}
    \ \ \bigwedge_{i=1}^u
    \ \ \bigvee_{s_{ij}>0}
    \ \ \bigwedge_{s_{ik}<0}
    \ \ (e_{j}-e_k)\cdot n>0\\
  & \Longleftrightarrow
    \ \ \underset{n\in\R^d}{\exists}
    \ \ \bigwedge_{i=1}^u
    \ \ \max_{s_{ij}>0}e_jn > \max_{s_{ik}<0}e_kn\\
  &\Longleftrightarrow
    \ \ \underset{n\in\R^d}{\exists}
    \ \ \bigwedge_{i=1}^u
    \ \ \bigwedge_{s_{ik}<0}
    \ \ \bigvee_{s_{ij}>0}
    (e_{j}-e_k)\cdot n>0\\
  &\Longleftrightarrow
    \ \ \underset{n\in\R^d}{\exists}
    \ \ \bigwedge_{i=1}^u
    \ \ \bigwedge_{s_{ik}<0}
    \ \ \bigvee_{s_{ij}>0}
    \ \ (e_{j}-e_k)\cdot n\geq1\\
  &\Longleftrightarrow
    \ \ \underset{n\in\R^d}{\exists}
    C(n)
\end{align*}
with $C(n)$ as in Theorem~\ref{TH:main}.
\begin{corollary}
  Consider $f\in\Z[x_1,\dots,x_d]^u$, say, $f=(s\circ c)x^e$, where
  $s\in\{-1,0,1\}^{u\times v}$, $c\in\Z_+^{u\times v}$, $e\in\N^{v\times d}$.
  Let $f^*=(s\circ c^*)x^e$, where $c^*$ is a $u\times v$-matrix of pairwise
  different indeterminates. Then the following are equivalent:
  \begin{enumerate}[(i)]
  \item The incomplete subtropical satisfiability checking method for several
    inequalities over QF\_NRA (quantifier-free nonlinear real arithmetic)
    introduced in \cite{FontaineOgawa:17b} succeeds on $f>0$.
  \item The following SMT problem with unknowns $n$ is satisfiable in QF\_LRA
    (quantifier-free linear real arithmetic):
    \begin{displaymath}
      \bigwedge_{i=1}^u
      \ \bigwedge_{s_{ik}<0}
      \ \bigvee_{s_{ij}>0}
      \ (e_{j}-e_k)\cdot n\geq1.
  \end{displaymath}
  \item $f^*>0$ has a parametric positive solution.
  \end{enumerate}
  In the positive case, $f(r^n)>0$ for all
  $r\geq 1+v\sum\limits_{s_{ik}<0}c_{ik}$.
\end{corollary}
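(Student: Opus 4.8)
The plan is to assemble the statement from three pieces that are, by this point, essentially already in place. First, the chain of equivalences displayed immediately above the corollary shows that the condition $C''$ extracted from \cite[Theorem~12]{FontaineOgawa:17b} is equivalent to $\exists n\in\R^d\, C(n)$ with $C(n)$ as in Theorem~\ref{TH:main}; reading ``the method succeeds on $f>0$'' as ``$C''$ is satisfiable'' then gives (i) $\Longleftrightarrow \exists n\in\R^d\, C(n)$. Second, $C(n)$ is a linear formula, so by the Linear Tarski Principle — already invoked in the proof of Theorem~\ref{TH:main} — it has a real solution if and only if it has a rational one, and the latter is precisely satisfiability in QF\_LRA; this yields (i) $\Longleftrightarrow$ (ii). Third, Theorem~\ref{TH:main} gives (ii) $\Longleftrightarrow$ (iii) directly, since its statements (ii) and (iii) (solvability of $C(n)$ over $\R$, resp.\ over $\Z$) are each equivalent to $f^*>0$ having a parametric positive solution, a rational solution being cleared to an integer one via the principal denominator. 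So the equivalence of (i), (ii), (iii) follows from material already developed, exactly as in the proof of Corollary~\ref{co:issac}.

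For the final bound, I would invoke the positive case of Theorem~\ref{TH:main}, which supplies the parametric positive solution $\sol(c)=t^n$ with $t=1+\sum_{s_{ij}>0,\ s_{ik}<0} c_{ik}/c_{ij}$ and, more strongly, $f(r^n)>0$ for every $c>0$ and every $r\geq t$. It therefore suffices to bound $t$ from above uniformly over the fixed coefficient matrix $c$. Since every entry of $c$ is a positive integer, $c_{ij}\geq 1$, so each summand satisfies $c_{ik}/c_{ij}\leq c_{ik}$; and for each pair $(i,k)$ with $s_{ik}<0$ there are at most $v$ column indices $j$ with $s_{ij}>0$, whence $t\leq 1+v\sum_{s_{ik}<0} c_{ik}$. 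This is exactly the claimed threshold, and the estimate is literally the one carried out in the proof of Corollary~\ref{co:issac}.

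I expect the only genuine obstacle to be the bookkeeping in the first step: verifying that ``the subtropical satisfiability method of \cite{FontaineOgawa:17b} succeeds on $f>0$'' really is \emph{equivalent} to satisfiability of $C''$, and not merely implied by it — i.e.\ that \cite[Theorem~12]{FontaineOgawa:17b} functions there as a terminating criterion whose non-satisfiability is the method's only mode of failure. Beyond that, two routine points in the displayed chain deserve an explicit line: that the per-row auxiliary variable $\gamma_i$ may be pulled inside the disjunction $\bigvee_{s_{ij}>0}$ before Fourier--Motzkin elimination (legitimate because $\gamma_i$ occurs only in the $i$-th conjunct), and that the homogeneous strict system $\bigvee_{s_{ij}>0}(e_j-e_k)\cdot n>0$ may be tightened to $\bigvee_{s_{ij}>0}(e_j-e_k)\cdot n\geq 1$ by rescaling a witness $n$ by the reciprocal of the minimum of the finitely many positive slacks it realizes. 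Neither point is deep, but both are needed to make the equivalence with (ii) airtight.
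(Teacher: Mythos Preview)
Your proposal is correct and follows essentially the same route as the paper: the equivalence of (i), (ii), (iii) is read off from the preceding derivation $C''\Longleftrightarrow\exists n\in\R^d\,C(n)$ together with Theorem~\ref{TH:main}, and the bound on $r$ is obtained exactly as in Corollary~\ref{co:issac}. One small over-complication: QF\_LRA is linear \emph{real} arithmetic, so satisfiability in (ii) already means $\exists n\in\R^d\,C(n)$ and your detour through rational solutions via the Linear Tarski Principle is unnecessary (though harmless) for (i)$\Leftrightarrow$(ii).
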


\begin{proof}
  The equivalence between (i), (ii), and (iii) has been derived above. About the
  solution $r$ see the proof of Corollary~\ref{co:issac}.
\end{proof}

The equivalence between (i) and (iii) in the corollary states the following: The
procedure in \cite{FontaineOgawa:17b} yields ``sat'' in contrast to ``unknown''
\emph{if and only if} not only the input system is satisfiable, but that system
with all real choices of coefficients with the same signs as in the input
system. While there are no formal algorithms in \cite{FontaineOgawa:17b}, the
work has been implemented within a combination of the veriT solver
\cite{Bouton:2009:VOT:1614530.1614546} with the library STROPSAT
\cite[Sect.~6]{FontaineOgawa:17b}. Our characterization applies in particular to
the completeness of this software.

We have added (ii) to the corollary, because we consider this form optimal for
algorithmic purposes. Like the original input $C''$ used in
\cite{FontaineOgawa:17b} this is a conjunctive normal form, which is ideal for
DPLL-based SMT solvers \cite{NieuwenhuisOliveras:06a}. Recall that $u$ is the
number of inequalities in the input, and $d$ is the number of variables. Let
$\iota$ and $\kappa$ be the numbers of positive and negative coefficients,
respectively. Then compared to \cite{FontaineOgawa:17b} we have reduced $d+u$
variables to $d$ variables, and we have reduced $u\kappa$ clauses with $\iota$
atoms each plus $u$ unit clauses to some different $u\kappa$ clauses with
$\iota$ atoms each but without any additional unit clauses.

With the \texttt{:produce-model} option the SMT-LIB standard
\cite{BarrettFontaine:17a} supports solutions like the $r^n$ provided by our
corollary. The work in \cite{FontaineOgawa:17b} does not address the computation
of solutions. It only mentions that sufficiently large $r$ will work, which
implicitly suggests a brute-force search like the one in \cite[Algorithm~1,
l.10--12]{Sturm:15b}.

\subsubsection*{Acknowledgments}
This work has been supported by the European Union's Horizon 2020 research and
innovation programme under grant agreement No H2020-FETOPEN-2015-CSA 712689
SC-SQUARE and by the bilateral project ANR-17-CE40-0036 and DFG-391322026
SYMBIONT. The second author would like to thank Georg Regensburger for his
hospitality and an interesting week of inspiring discussions around the topic,
and Dima Grigoriev for getting him started on the subject.

\end{document}